\theoremstyle{plain}
\newtheorem*{thm}{\protect\theoremname}
\newcommand{\id}{\mathds{1}}
\newcommand{\Tr}{\text{Tr}}
\newcommand{\tr}{\text{Tr}}
\newcommand{\eq}[1]{\begin{equation}
    \begin{aligned}#1\end{aligned}
\end{equation}}
\newcommand{\ket}[1]{\left|#1\right\rangle}
\newcommand{\bra}[1]{\left\langle#1\right|}
\newcommand{\rk}{\text{rank}}
\newcommand{\ot}{\otimes}
\providecommand{\theoremname}{Theorem}
\newcommand{\printfnsymbol}[1]{%
  \textsuperscript{\@fnsymbol{#1}}%
}
\begin{document}

\title{Operational symmetries of entangled states}

\author{Ilan Tzitrin\normalfont\textsuperscript{1}}
\email{itzitrin@physics.utoronto.ca}
\author{Aaron Z. Goldberg\normalfont\textsuperscript{1}}
\email{goldberg@physics.utoronto.ca}
\author{Jesse C. Cresswell\normalfont\textsuperscript{1}}
\email{jcresswe@physics.utoronto.ca}

\affiliation{Department of Physics, University of Toronto, Toronto, ON, M5S 1A7}

\begin{abstract}
    Quantum entanglement obscures the notion of local operations; there exist quantum states for which all local actions on one subsystem can be equivalently realized by actions on another. We characterize the states for which this fundamental property of entanglement {does and does not hold}, including multipartite and mixed states. Our results lead to a method for quantifying entanglement based on operational symmetries and has connections to quantum steering, envariance, the Reeh-Schlieder theorem, {and classical entanglement.}
\end{abstract}
\maketitle

Entanglement -- touted by Schr\"odinger as the defining feature of quantum theory \cite{Schrodinger1935} -- is often phrased as the inability to describe certain states of multiple particles by local specifications of each one. But the significance of entanglement and Einstein, Podolsky, and Rosen's (EPR's) unease with it \cite{EPR1935} is perhaps elucidated in its more colloquial articulation, phrased positively: the ability of a local action \emph{here} to immediately reflect in a local action \emph{there}, no matter the distance between the two. 
Investigations of entanglement have since grown from its role as a mere curiosity \cite{Gisin1991,Mermin1993,Zurek2003,ArndtHornberger2014} to a resource \cite{Horodeckietal2009,Modietal2012,Howardetal2014,GoldbergJames2018Euler},  {expanding our understanding of its fundamental physical and metaphysical implications. }

Following EPR's paradox of local measurements causing instantaneous collapses of distant states \cite{EPR1935}, Schr\"odinger quickly realized that a local experimenter can use certain entangled states to ``steer'' the measurement results of a distant local experimenter \cite{Schrodinger1935}. An entangled state can be steered if a local experimenter cannot explain their measurement results even by assuming an unknown local state \cite{Wisemanetal2007}. These steerable states form a subset of those that cannot be described by any additional local variables \cite{Uolaetal2019}, where the latter violate Bell inequalities \cite{Bell1964}. All of these constructions need only a single set of measurements to show their deviation from classical predictions. In this work we investigate systems for which \textit{any} local operation can be detected by a separate experimenter. Our goal is to derive insight into the remarkable ramifications of quantum entanglement by formalizing the replicability of local actions among the subsystems of a composite state.

 Symmetries between local operations play an important role in applications such as environment-induced invariance, or \emph{envariance} \cite{Zurek2003}, where they can be used to derive Born's Rule from the no-collapse assumption of quantum mechanics \cite{Zurek2005,Herbut2007}. States are envariant under \emph{specific} local unitaries on a given system if they can be undone by local unitaries on the environment; here, we find states that are envariant under \emph{all} local operations. These states are intriguing because any local operation on them can be remotely negated, so that any observer must be completely ignorant of the local state.

Reproducibility of local operations on different subsystems is also a generic characteristic of quantum field theories (QFT) that is less well-known in the quantum information community. Working in the vacuum state on Minkowski spacetime, the Reeh-Schlieder theorem \cite{Reeh1961} entails that any operator supported in a local region can be reproduced by another operator in a different local region, potentially at spacelike separations. This result implies that the vacuum is entangled between any set of local regions, and has implications for the localizability of quantum fields \cite{Knight1961,Haag1996}. Our motivation is also to understand whether or not states in quantum mechanical systems obey an analogue of this cornerstone result.

\color{black}
Our investigation begins with the following question: What is the set of states $\left |\psi \right\rangle$ on a bipartite Hilbert space $\mathcal{H}_A\ot\mathcal{H}_B$ for which a linear operator $U\ot \id$ acting only on subsystem $A$ can be reproduced by an operation $\id\ot V$ acting only on subsystem $B$? An example of such a state is the Bell state of two qubits, $\ket{\Phi_0}\equiv\frac{1}{\sqrt{2}}\ket{00}+\ket{11}$, for which
$
U \otimes \id \ket{\Phi_0} = \id \otimes U^T\ket{\Phi_0}$ for any $U$.
We call this type of relationship an \emph{operational symmetry}. 


Consider the case where $A$ and $B$ are each represented by a $d$-dimensional qudit. We can express any pure 2-qudit state $\left|\psi\right\rangle$ in the Schmidt basis as
\begin{equation}\label{eq:compState}
    \left|\psi\right\rangle=\sum_{i=1}^{d}\sigma_{i}\left|i\right\rangle_{A}\left|i\right\rangle_{B},\ \ \sigma_i\geq\sigma_{i+1}\geq 0,\ \ \sum_i\sigma_i^2=1\, .
\end{equation}
The actions of general operators $U^A=U\otimes \id$ on $A$ and $V^B=\id\otimes V$ on  $B$ can be written in this basis as
\begin{align}\label{eq:UAaction}
    \begin{aligned}
    U^A\left|\psi\right\rangle
    &=\sum_{i,k=1}^{d}\sigma_{i}U_{ki}\left|k\right\rangle_{A}\left|i\right\rangle_{B},\\
    V^B\left|\psi\right\rangle&=\sum_{i,k=1}^{d}\sigma_{i}V_{ki}\left|i\right\rangle_{A}\left|k\right\rangle_{B}.
    \end{aligned}
\end{align}
These actions are equivalent if and only if 
\eq{U\Sigma=\Sigma V^T,
\label{eq:matrixconditions2}} where $\Sigma$ is the diagonal matrix of Schmidt coefficients.
Hence, the condition on $\left|\psi\right\rangle$ for a related operation $V_B$ to exist for a given $U_A$ is simple: the Schmidt matrix, $\Sigma$, must be invertible. 
In turn, this is equivalent to all of the Schmidt coefficients being nonzero, or the Schmidt rank being maximal, in which case we say the state is \emph{fully entangled} \cite{Witten2018}.

Restricting $U$ to be unitary, we can determine the class of states $\ket{\psi}$ that have related operations $V$ that are also unitary. It is necessary and sufficient for $\ket{\psi}$ to be maximally entangled; i.e., all of the Schmidt coefficients $\sigma_i$ must be equal \cite{Bennettetal1996,Gisin1998}. For maximally entangled $\ket{\psi}$, $\Sigma\propto\id$, so $V=U^T$ shares the unitarity of $U$. Unitarity for $V$, on the other hand, requires $V^\dagger V=\id$ for all $V=\Sigma U^T\Sigma^{-1}$. This implies $\left[U,\Sigma^2\right]=0$ for all unitary matrices $U$, which then implies $\Sigma\propto\id$, by Schur's first lemma \cite{Ramond2010}. Given some state, the unitarity of $V$ for an entire set of generators of an irreducible representation of the unitary group, such as the generalized Pauli matrices, is a necessary and sufficient condition for the state to be maximally entangled. {This is  in line with the results of Ref. \cite{Lo2001} regarding projective operations and entanglement manipulations of entangled states.}

As a corollary, we immediately see the relationship between {a state's entanglement and its symmetries under general local quantum operations}. Fully entangled {bipartite} pure states $\rho=\ket{\psi}\bra{\psi}$ satisfy \eq{\rho\to\sum_l \left(K_l\ot\id\right)\rho \left(K_l\ot\id\right)^\dagger=\sum_l \left(\id\ot J_l\right)\rho \left(\id\ot J_l\right)^\dagger,\label{eq:related Kraus transformation}}
for the related transformations $J_l=\Sigma K_l^T \Sigma^{-1}$. The related map again exists for all quantum operations if and only if the state is fully entangled. Furthermore, when the map on $A$ is completely-positive and trace-preserving (CPTP), then the related map on $B$ is completely positive (see Appendix \ref{app:CPTP}). Conditions for the trace preservation of the related map are more nuanced, requiring both maximal entanglement and the constraint on the Kraus operators $\sum_l K_lK_l^\dag=\id$ in addition to the usual trace preservation condition $\sum_l K_l^\dag K_l=\id$. {This does not hold true in general;} hence, no state guarantees that any CPTP map on subsystem $A$ can be expressed as a CPTP map on subsystem $B$. 
\textcolor{black}{Of course, trace preservation \cite{NielsenChuang2000} and complete positivity \cite{Suarezetal1992,Pechukas1994,HartmannStrunz2019} are not stringent requirements of the quantum theory, so we can say that sensible related transformations as in \eqref{eq:related Kraus transformation} exist for all quantum operations on fully entangled pure states.}


With minor modifications, these results can be adapted to deal with qudits of unequal dimension. We find that the action of an operation on the smaller system can be replicated on the larger system with some freedom if and only if the Schmidt rank of the combined state is equal to the dimension of the smaller system. Actions on the larger system cannot, in general, be replicated by actions on the smaller system.
The related but distinct notions of when nonlocal operations on a bipartite state can have support on just a single subsystem and when the state dynamics are themselves symmetric, have been recently investigated in \cite{Hirai2019} and \cite{Qinetal2020}, respectively.

We compile these ideas into a generalized theorem:
\begin{thm}[Entanglement-Symmetry Relation for Pure States of Two Qudits]\label{thm:1}
\ \\ Let $\ket\psi_{AB} =\sum_{j=1}^{d_{A}}\sum_{k=1}^{d_{B}}C_{jk}\ket{j}_{A} \ket{k}_{B}$
be a pure state of two qudits, $A$ and $B$, of dimension $d_{A}$
and $d_{B}$, respectively, with $d_{A}\leq d_{B}$. Then $\ket{\psi}_{AB} $ is fully entangled if and only if any local operation $U\otimes \id_{d_B}$ acting on $A$ can equivalently be expressed
as a related operation $\id_{d_A}\otimes V$ acting on $B$.

In the case $d_A$ = $d_B$, a state is maximally entangled if and only if every unitary on $A$ can be replicated by a unitary on $B$. 
\end{thm}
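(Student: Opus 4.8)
The plan is to reduce the operational-symmetry condition to a single matrix equation and then read off entanglement from its solvability. Writing the state's coefficients as the $d_A\times d_B$ matrix $C$, the actions in \eqref{eq:UAaction} generalize directly: $U\otimes\id_{d_B}$ sends $C\mapsto UC$ while $\id_{d_A}\otimes V$ sends $C\mapsto CV^T$, so the two operations agree on $\ket{\psi}_{AB}$ exactly when $UC=CV^T$. Passing to the Schmidt (singular-value) basis replaces $C$ by the rectangular diagonal matrix $\tilde\Sigma=[\Sigma_0\mid 0]$, where $\Sigma_0=\mathrm{diag}(\sigma_1,\dots,\sigma_{d_A})$ collects the Schmidt coefficients and the zero block absorbs the $d_B-d_A$ surplus dimensions of $B$. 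For $d_A=d_B$ this is exactly the condition $U\Sigma=\Sigma V^T$ already established in the text, so the genuinely new work is the rectangular case.

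For the forward (``if'') direction I would assume full entanglement, i.e.\ every $\sigma_i>0$ so that $\Sigma_0$ is invertible, and solve $U\tilde\Sigma=\tilde\Sigma W$ (with $W\equiv V^T$) block by block. Partitioning $W$ into a top-left $d_A\times d_A$ block $W_{11}$, a top-right $d_A\times(d_B-d_A)$ block $W_{12}$, and bottom blocks $W_{21},W_{22}$, the equation collapses to $U\Sigma_0=\Sigma_0 W_{11}$ together with $\Sigma_0 W_{12}=0$. Invertibility of $\Sigma_0$ then delivers the explicit solution $W_{11}=\Sigma_0^{-1}U\Sigma_0$ and $W_{12}=0$, leaving $W_{21},W_{22}$ completely free; hence a related $V=W^T$ exists for every $U$, with the residual freedom originating precisely in the extra dimension of $B$.

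For the converse I would show that if some $\sigma_i=0$ then the symmetry must fail for a suitable $U$. Comparing entries of $U\Sigma_0=\Sigma_0 W_{11}$ at a row $i$ with $\sigma_i=0$ and a column $j$ with $\sigma_j\neq0$ forces $U_{ij}\sigma_j=0$, i.e.\ $U_{ij}=0$; choosing any operator $U$ that mixes a zero-Schmidt direction with a nonzero one violates this, rendering the equation unsolvable. The point to stress is that the freedom afforded by $d_B>d_A$ lives entirely in $W_{12},W_{21},W_{22}$ and never touches the obstructed block $W_{11}$, so enlarging $B$ cannot repair a deficient Schmidt rank. This also makes transparent the accompanying asymmetry noted earlier: actions on the larger system $B$ need not be reproducible on the smaller system $A$.

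Finally, for the equal-dimension statement I would invoke the argument already sketched in the text. Full entanglement (from the first part) guarantees that $V=\Sigma U^T\Sigma^{-1}$ exists for every unitary $U$; demanding $V^\dagger V=\id$ for all such $U$ forces $[U,\Sigma^2]=0$ for every unitary, whence $\Sigma^2\propto\id$ by Schur's lemma and the state is maximally entangled, while conversely maximal entanglement gives $\Sigma\propto\id$ and $V=U^T$, manifestly unitary. The main obstacle I anticipate is not conceptual but a matter of careful bookkeeping in the rectangular case: one must verify that both the solvability criterion and its obstruction are pinned to invertibility of $\Sigma_0$ alone, and confirm that the Schmidt-basis reduction does not quietly exclude the very mixing operators $U$ that furnish the counterexample in the converse.
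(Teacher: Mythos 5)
Your proposal is correct and follows essentially the same route as the paper's proof: reduce the symmetry condition to the matrix equation $UC=CV^{T}$, pass to the Schmidt basis, and tie solvability for every $U$ to the invertibility of the $d_A\times d_A$ block of Schmidt coefficients. Your block decomposition of $W=V^{T}$ is simply an explicit parametrization of the paper's right-inverse construction $\tilde V=(\Sigma_R^{-1}\tilde U\Sigma)^{T}$ (which corresponds to setting $W_{21}=W_{22}=0$), and your entrywise argument for the converse spells out what the paper asserts in a single line.
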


\begin{proof}
First, let us use the singular value decomposition $C=Y\Sigma Z$ with $\left | i\right\rangle_{A}=\sum_j Y_{ji} \left | j\right\rangle_{A}$ and $\left | i\right\rangle_{B}=\sum_k Z_{ik} \left | k\right\rangle_{B}$ to write the state in the Schmidt basis,
\begin{equation}\label{eq:SchmidtState3}
\ket\psi_{AB} =\sum_{i=1}^{d_{A}}\sigma_i\left|i\right\rangle_{A} \left|i\right\rangle_{B}.
\end{equation}
By convention $\Sigma$ is a $d_A\times d_B$ diagonal matrix with the Schmidt coefficients $\sigma_i$ listed in decreasing order.

If $\ket{\psi}_{AB}$ is fully entangled, then, by definition, its Schmidt rank is maximal, viz. $\rk(\Sigma)=d_A$. This implies that there exists a right inverse $\Sigma_R^{-1}$ such that $ \Sigma\Sigma_R^{-1}=\id_{d_A}$. We can then construct an operator $\tilde V$ in this basis as
\begin{equation}
\tilde V=(\Sigma_R^{-1} \tilde U \Sigma)^T,
\label{eq:V tilde sigma right}
\end{equation}
where $\tilde U=Y^\dag U Y$ is the original $U$ in the Schmidt basis. By construction $\tilde V$ satisfies the conditions \eqref{eq:matrixconditions2} equivalent to $(\tilde U\ot\id) \ket\psi_{AB}=(\id\ot \tilde V) \ket\psi_{AB}$. When $d_A< d_B$, $\tilde V$ is a highly non-unique solution to this equation. Transforming back to the original basis with $ U=Y \tilde U Y^\dag$, and $ V=Z^T \tilde VZ^*$ we have 
\begin{equation}
V=(C_R^{-1} U C)^T,
\end{equation}
where $C_R^{-1}\equiv Z^\dag \Sigma_R^{-1} Y^\dag$, and it follows that $(U\ot\id) \ket\psi_{AB}=(\id\ot V) \ket\psi_{AB}$.

Conversely, assuming that a solution $V$ exists to the equation $(U\ot\id) \ket\psi_{AB}=(\id\ot V) \ket\psi_{AB}$ for any operation $U$, by changing to the Schmidt basis this also implies $\tilde V$ exists such that $(\tilde U\ot\id) \ket\psi_{AB}=(\id\ot \tilde V) \ket\psi_{AB}$. We have shown above that this is equivalent to $\tilde V$ satisfying
$
U\Sigma=\Sigma{\tilde V}^T$.
Given that $\Sigma$ is diagonal, this can only be true for all $U$ if $\rk(\Sigma)=d_A$, which is the definition of $\ket\psi_{AB}$ being fully entangled.

The second statement in the theorem was proven in the preceding discussion.
\end{proof}

This generalization shows the important role that the dimensions of the subsystems can play for reproducing operations with finite-dimensional systems. One might try to salvage the operational symmetries in the $d_A>d_B$ case by adding an ancillary qudit, $C$, separable from $B$, so that $d_A \leq d_B d_C$, and then applying our Theorem to systems $A$ and $BC$. This will not work since $A$ will no longer be fully entangled with $BC$, even it was initially fully entangled with $B$.

It comes as no surprise that fully entangled states can achieve quantum steering \cite{Zhenetal2016}. The operational symmetries of fully entangled states allow one party's local measurements to be remotely affected by another's local operations,
negating the possibility for the former party to assign a local hidden state model to explain their measurement results. Moreover, there can be no local hidden variable model to explain the measurements obtained by the pair of parties on states exhibiting operational symmetries, in line with the fact that all fully entangled states can be used to violate Bell inequalities \cite{Gisin1991,GisinPeres1992,Yuetal2012}. The hierarchy is as follows: operationally-symmetric states are a (non-convex) subset of states violating Bell inequalities, which are a subset of steerable states, which are a subset of entangled states.


These results can be applied immediately to tripartite pure states $\ket{\psi}_{ABC}$.
Although a generalized Schmidt decomposition for multipartite states does not exist, one does exist for any given bipartition, such as
$
\ket{\psi} _{ABC}=\sum_{i}\sigma_{i}\ket{i}_{AC} \ket{i}_{B}.
$
 Our Theorem then guarantees that an operation $U$ on $A$ can always be represented as a related operation $V$ on $B$ when $d_Ad_C\leq d_B$ and the Schmidt matrix $\Sigma$ has rank $d_Ad_C$. Again, for the related matrix $V$ to be unitary for all unitary $U$, it is necessary and sufficient that all of the Schmidt coefficients are equal. 
 Similarly, for the bipartition $
\ket{\psi} _{ABC}=\sum_{i}\sigma_{i}\ket{i}_{A} \ket{i}_{BC},
$ an operation $U$ on $A$ can be represented as an operation $V$ on $BC$ when $d_A\leq d_Bd_C$ and the Schmidt matrix $\Sigma$ has rank $d_A$.
The unitarity of the related map on $BC$ is  a necessary condition for the existence of a quantum channel on system $B$ alone to reproduce the action of $U$ on system $A$, but it is not sufficient. {Furthermore, the general non-factorizability of $V$ over $B$ and $C$ is a manifestation of the monogamy of entanglement \cite{Coffman2000, Osborne2006}. It is one of the reasons that one cannot use the idea of operational symmetries to reduce the number of local unitaries required to transform between two equivalent graph states, which form an important subset of multipartite entangled states \cite{Hein2004, Nest2004, Tzitrin2018}}.
 The case of multipartite pure states with more than three qudits can be similarly analyzed by selecting an appropriate bipartition for the local operations.

 We can also consider the possibility of finding related operations for mixed states, but unlike pure states the only examples are trivial. Using ideas from the multipartite setting, a bipartite mixed state $\rho_{AB}$ can be studied by considering its purification $\ket\rho_{ABC}$. Since a purification can always be found with $d_{A}d_{B}\leq d_{C}$, we can apply our Theorem by assuming full entanglement in the bipartition $AB\vert C$
or in $A\vert BC$. In the first case, we look for a separable operation $U_A\otimes W_B$ that leaves the state $\rho_{AB}$ invariant, so that $W_B$ reverses $U_A$ (see \cite{Gheorghiu2007} for a discussion of such separable operations). This can occur if and only if $U_A\otimes W_B$ is related to a unitary operation $V_C$, leaving $\rho_{AB}$ unaffected after $C$ is traced out. However, by our Theorem, $V_C$ is unitary if and only if $\ket\rho_{ABC}$ is maximally entangled between $AB$ and $C$, which implies $\rho_{AB}$ is maximally mixed. {This is a trivial case, since any trace-preserving operation on $A$ leaves the state unchanged.}

For the second type of bipartition, $A\vert BC$, we can find a joint operation $V_{BC}$ for every operation on $A$ by choosing a fully entangled purification. However, $V_{BC}$ cannot be expressed as
an operation on $B$ alone in general; the factorization into unitaries $V_{BC}=V_{B}\otimes V_{C}$ is only possible in special cases, for which we find the desired result $U_{A}\rho_{AB} U_{A}^{\dagger}=V_{B}\rho_{AB} V_{B}^{\dagger}$.
Therefore we conclude that the maximally mixed state is the only mixed
state that admits complete operational symmetry.


Seeing that the related operation $V=\Sigma U^T\Sigma^{-1}$ is not generally unitary even for a local unitary $U$ on $A$, one may wonder which unitary on $B$ can best replicate the action of $U$. We demonstrate in Appendix \ref{app:maximizing fidelity} that the optimal unitary comes 
from the unitary part of $\Sigma U^T\Sigma$. For any bipartite pure state, we show that 
\eq{M\left(U \right)\equiv\max_{V\in\mathcal{U}}\left|\bra{\psi}U^\dagger\ot V\ket{\psi}\right|=\Tr\left|\Sigma U\Sigma\right|,
\label{eq:max unitary}
} maximizing the fidelity over the set of unitaries $\mathcal{U}$ on $\mathcal{H}_B$. This function is bounded between 0 and 1, with maximally entangled states achieving the upper bound regardless of $U$. 
Perturbative expansions of the trace norm appearing in \eqref{eq:max unitary} for variations in $U$ or $\Sigma$ can be carried out using recently-developed methods in patterned-matrix calculus \cite{Cresswell2019}.

The operational symmetries enjoyed by fully entangled states motivate new ways to quantify quantum entanglement, a burgeoning area of research \cite{Vidal2002,MintertBuchleitner2007,Islam2015a,Schwaigeretal2015,Lancienetal2016,Cresswell2017a,UmemotoTakayanagi2018,TanJeong2018}.
We look for a metric inspired by \eqref{eq:max unitary} that depends only on the state in question. Maximizing $M\left(U\right)$ over all $U$ yields unity for any state (let $U=\id$), whereas minimizing over $U$ gives the \textit{minimum fidelity}
\eq{m\left(\rho\right)\equiv\min_{U\in \mathcal{U}}\max_{V\in \mathcal{U}}\left|\Tr \left[(U^\dag\ot V)\, \rho\right]\right|,\label{eq:minimum fidelity measure}} which equals $\sum_{i=1}^{d_A}\sigma_i\sigma_{d_A-i}$ for pure states. This leads to a measure that vanishes for states with Schmidt rank $r\leq d_A/2$ and ranges up to 1 for maximally entangled states. Another promising candidate, which we call the $\emph{symmetry of entanglement}$, is sensitive to the entanglement in states with arbitrary Schmidt rank:
\eq{
E_S\left(\rho\right)\equiv\int dU\,\max_{V\in \mathcal{U}}\left|\Tr \left[(U^\dag\ot V)\, \rho\right]\right|,\label{eq:average fidelity measure}
}
where $dU$ is the Haar measure over unitary operators on $\mathcal{H}_A$. 
For pure states this can be simplified using \eqref{eq:max unitary}, and the quantity ranges from $\sqrt{\pi}\Gamma\left(d_A\right)/2\Gamma\left(d_A+1\right)$ for separable states to 1 for maximally entangled states.

We plot our symmetry-inspired entanglement quantifiers for various states in Fig. \ref{fig:comparing4x4_measures}. The functions $m\left(\rho\right)$ and $E_S(\rho)$ are distinct from both the entanglement entropy $S\left(\rho\right)=-\Tr\left(\rho_A\log \rho_A\right)$ and the entanglement negativity  $\mathcal{N}\left(\rho\right)\propto \tr\left\vert\rho^{T_A}\right\vert
-1$, where a superscript $T_i$ denotes the partial transpose with respect to subsystem $i$ \cite{Vidal2002, Cresswell2019}. 
 For pure states, entanglement entropy and negativity reduce to  $-\Tr\left(\Sigma^2\log \Sigma^2\right)$, and  $\left[\tr\left(\Sigma\right)\right]^2 -1$ respectively, and share similar behaviours with our measures as demonstrated in Fig. \ref{fig:comparing4x4_measures}. Going beyond pure states, the minimum fidelity and symmetry of entanglement have desirable properties such as convexity (Appendix \ref{app:maximizing fidelity}), but cannot be considered true entanglement monotones \cite{PlenioVirmani2014} as they are sensitive to classical correlations, i.e., mixedness. 
It may ultimately be possible to use this property of operational symmetries to help distinguish between classical and quantum correlations in a given quantum state \cite{Modietal2012}.
\begin{figure}
    \centering
    \includegraphics[width=\columnwidth]{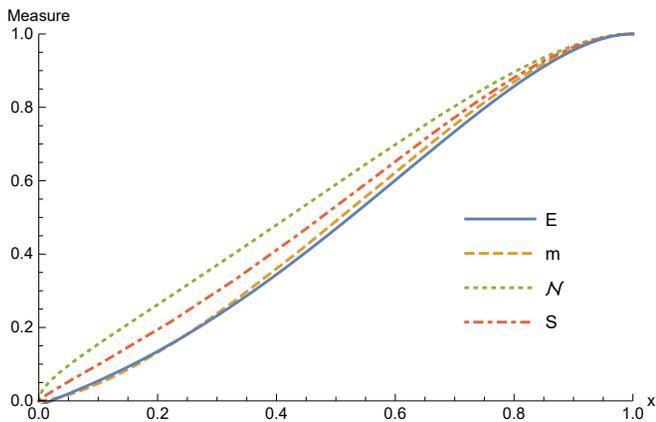}
    \caption{Quantifying entanglement for a state with Schmidt coefficients $\left(1-x/4-x^2/4-x^3/4\right)^{1/2} $, $\tfrac{1}{2}x^{1/2}$, $\tfrac{1}{2}x$, and $\tfrac{1}{2}x^{3/2}$. Plotted are the newly-proposed symmetry of entanglement (blue, solid) and  minimum fidelity (orange, dashed) in comparison to negativity (green, dotted) and entanglement entropy (red, dot-dashed). All measures are scaled such that they range between 0 for separable states and 1 for maximally entangled states.}
    \label{fig:comparing4x4_measures}
\end{figure}


Symmetries under specific unitaries, a restriction of our more general treatment, are exploited in the literature \cite{Zurek2003,Zurek2005} under the moniker of envariance. Envariance is useful for providing a derivation of Born's Rule, since it is thought to be $\emph{ad hoc}$ to postulate this rule in interpretations of quantum theory -- among them Everettian and Many Worlds quantum mechanics \cite{dewitt2015} -- which only subsume unitary evolution and thus have no concept of wavefunction collapse. In one of several approaches in \cite{Zurek2005}, Born's Rule is established with the help of \emph{partial swaps}, which exchange two orthonormal basis sets in some subspace of the system Hilbert space. The ability in maximally entangled states to undo a partial swap on the system by a partial counterswap of the environment implies that a perfect knowledge of the global state comes at the expense of a complete, objective ignorance of the local states. Therefore one is justified in assigning equal probabilities to the local states, and Born's Rule follows from an extension of this idea to states without maximal entanglement.

The consideration of envariance under partial swaps is close in spirit to our work, since it is equivalent to operational symmetry under all unitaries, which we have seen holds for maximally entangled states. In this way, our proposed symmetry of entanglement \eqref{eq:average fidelity measure} can be thought of as quantifying the degree of objective ignorance or indifference an observer possesses about a state.


Some of our results can be considered as quantum mechanical analogues of the Reeh-Schlieder theorem from algebraic QFT, which underlines the highly entangled nature of typical states of quantum fields \cite{Reeh1961}. We briefly review the main ideas of this theorem in order to point out the similarities {to our quantum-mechanical results} (see \cite{Witten2018,Harlow2018} for a more complete introduction and applications).

Consider an algebra $\mathcal{A}_A$ of quantum operators. A state $\ket{\Psi}$ is \emph{cyclic} with respect to $\mathcal{A}_A$ when the states $u\left | \Psi\right\rangle$ for $u\in\mathcal{A}_A$ are dense in the Hilbert space $\mathcal{H}$. For instance, the vacuum state $\ket{\Omega}$ of a QFT is cyclic with respect to the algebra of operators supported on a complete Cauchy slice through the spacetime, since any state in the vacuum sector Hilbert space $\mathcal{H}_0$ can be prepared on an initial value hypersurface.

We say that a state $\ket{\Psi}$ is \emph{separating} with respect to $\mathcal{A}_A$ when $u\ket{\Psi}=0$ implies $u=0$ for any $u\in\mathcal{A}_A$. As an example, consider an  algebra $\mathcal{A}_{B}$ that commutes with $\mathcal{A}_A$. Let $\ket{\Psi}$ be cyclic with respect to $\mathcal{A}_B$, and suppose that there is a $u\in\mathcal{A}_A$ such that $u\ket{\Psi}=0$. It follows that $u v\ket{\Psi}=0$ for all $ v\in\mathcal{A}_B$ by commuting $u$ onto $\ket{ \Psi}$. But, by cyclicity, the states $v\ket{\Psi}$ are dense in $\mathcal{H}$ so that $u$ annihilates all states and must be the zero operator $u=0$. The implication is that a cyclic state for $\mathcal{A}_{B}$ will be separating for any other algebra of commuting operators $\mathcal{A}_{A}$. 

In this language, the Reeh-Schlieder theorem says that the vacuum state of a QFT on Minkowski spacetime is cyclic and separating with respect to any local algebra $\mathcal{A}_\mathcal{V}$ of field operators supported in an open neighbourhood $\mathcal{V}$ of a spacetime region. This surprising theorem has a very pertinent corollary for spacelike separated neighbourhoods $\mathcal{V}$ and $\mathcal{V}'$. If $u\in \mathcal{A}_\mathcal{V}$ is any operator supported in $\mathcal{V}$ then there exists an operator $v\in\mathcal{A}_{\mathcal{V}'}$ supported in $\mathcal{V}'$ such that $u \ket{ \Omega}= v\ket{ \Omega}$, since either local algebra generates $\mathcal{H}_0$ by cyclicity. We are free to choose $\mathcal{V}'$ arbitrarily small and to be located at the other end of the universe from $\mathcal{V}$, yet the approximation between the states $u \ket{ \Omega}$ and $v\ket{ \Omega}$ can be made arbitrarily good. It is important to note that the related operator $v$ will, in general, not be unitary.

There is a clear parallel between the cyclicity of the vacuum in QFT and the notion of fully entangled states we have used. Returning to the exemplary case of two identical qudits \eqref{eq:compState}, we can consider the algebra $\mathcal{A}_A$ of linear operators acting on $\mathcal{H}_A$, and the commuting algebra $\mathcal{A}_B$ of operators acting on $\mathcal{H}_B$. Assuming the state $\left|\psi\right\rangle=\sum_{i=1}^{d}\sigma_{i}\left|i\right\rangle_{A}\left|i\right\rangle_{B}$ is fully entangled, $\sigma_i\neq 0$, the only way for $U\ot \id$ to annihilate the state is if $U\ket{i}_A=0$  $\forall\ i$. But since the $\ket{i}_A$ form a complete, orthonormal basis of $\mathcal{H}_A$, this can only occur if $U$ is the zero operator. Were the state not fully entangled, we could find a non-zero operator with kernel corresponding to the support of $\ket{\psi}$. Hence, the fully entangled condition for $\ket{\psi}$ implies that it is separating for $\mathcal{A}_A$, and a separating vector for $\mathcal{A}_A$ is cyclic for the commutant of $\mathcal{A}_A$ \cite{Witten2018}, which is $\mathcal{A}_B$. In the equal-dimensional case it is clear that $\ket{\psi}$ is also separating for $\mathcal{A}_B$, and therefore cyclic for $\mathcal{A}_A$. From this point of view it should not be surprising that the related operators $V^B$ in \eqref{eq:UAaction} exist for any $U^A$ if and only if $\ket{\psi}$ is fully entangled.

However, we have shown that it is still possible to find related operations even when the subsystems are not of equal size, as long as the state has maximal Schmidt rank. The state \eqref{eq:SchmidtState3} is separating for $\mathcal{A}_A$, but not for $\mathcal{A}_B$. Still, this implies that it is cyclic for $\mathcal{A}_B$, which ensures that any local operation on $A$ can be reproduced as a related operation on $B$. We have provided an explicit construction of such related operators in our Theorem.

{One other notable application of our results is to the so-called \emph{classical entanglement} between spatial and polarization degrees of freedom of an electromagnetic wave \cite{Spreeuw1998,Toppeletal2014,KarimiBoyd2015,Eberlyetal2016}.} 
The polarization subspace is two-dimensional and the spatial subspace is an infinitely-continuous stochastic function space \cite{Qianetal2015}; our Theorem thus ensures that any operator acting on the polarization degree of freedom of a classically-entangled state $\mathbf{E}\left(\mathbf{r}; t\right)=E_x\mathbf{x}+E_y\mathbf{y}$ can be replicated by an operator on the spatial degree of freedom. This classical example is intuitive, as an operation on the polarization degree of freedom corresponds to a physical rotation, which can be equivalently realized as a rotation of the functions in the superposition. Our results are a consequence of tensor product structure and group theory that extend beyond quantum mechanics.
\color{black}

While entanglement is \textit{defined} in terms of properties of the state, it is often \textit{explained} in terms of operations acting on the state. In this paper, we have characterized entangled states in terms of their symmetries under local operations. This allowed us to establish new measures of entanglement and gives an expanded framework for understanding concepts such as steering, envariance, and the Reeh-Schlieder theorem. {Operational symmetries and the Reeh-Schlieder theorem in particular have profound implications that warrant future investigation by the broader quantum information community.}

\begin{acknowledgments}
	We would like to thank Daniel James for inspiring this work, and Eli Bourassa and Hoi-Kwong Lo for helpful discussions. I.T. is supported by an Ontario Graduate Scholarship and a Discovery Grant from the Natural Sciences and Engineering Research Council of Canada (NSERC). A.Z.G. is supported by an Alexander Graham Bell Scholarship (No. 504825) and an NSERC Discovery Grant. J.C.C. is supported by a Vanier Canada Graduate Scholarship and an NSERC Discovery Grant. 
\end{acknowledgments}
\normalfont\textsuperscript{1}All authors contributed equally to this work.

\appendix
\section{Symmetries of completely positive trace-preserving maps}
\label{app:CPTP}
For a fully entangled pure state $\rho$ on $\mathcal{H}_A\otimes\mathcal{H}_B$ with equal dimensions, \textit{any} operator $K$ acting on the $A$ subsystem can be expressed by a related operator $ \Sigma K^T \Sigma^{-1}$ on $B$, where $\Sigma$ is the matrix with Schmidt coefficients along the diagonal. We thus have, for $J_l=\Sigma K_l^{T}\Sigma^{-1}$,
\eq{\sum_l \left(K_l\ot\id\right)\rho \left(K_l\ot\id\right)^\dagger=\sum_l \left(\id\ot J_l\right)\rho \left(\id\ot J_l\right)^\dagger.} We now determine when the related map is completely positive (CP) and when it is trace preserving (TP).

Unitary operations are TP. When the map on subsystem $A$ is unitary, the only states guaranteeing all related maps to be unitary are the maximally entangled ones; to guarantee that all TP maps on subsystem $A$ can be represented by TP maps on $B$, a state must have $\Sigma\propto \id$. The assumption that the generalized initial map is trace-preserving can be written as $\sum_l K_l^{\dagger} K_l=\id$. Inspecting the same condition for the related map, and using $\Sigma\propto \id$ yields
\eq{\sum_l J_l^\dagger J_l=\id \iff 
\sum_l K_l^{*}K_l^{T}=\id \, .
} This is the condition that the set $\{K_l^{T}\}$ is a valid set of Kraus operators, which is a viable possibility but does not occur in general. It is thus impossible to find a state that guarantees all TP maps on subsystem $A$ to be representable by TP maps on subsystem $B$.

For a quantum operation to be CP its Choi matrix must be positive and thus Hermitian.
We construct the Choi matrices of the maps on $A$ and $B$ by \eq{C_A&=\sum_{i,j}\ket{i}\bra{j}_C\ot
\sum_l K_l \ket{i}\bra{j}_A K_l^{\dagger},\\
C_B&=\sum_{i,j}\ket{i}\bra{j}_C\ot
\sum_l J_l \ket{i}\bra{j}_B J_l^{\dagger},
} using an auxiliary space $\mathcal{H}_C$ of the same dimension. These matrices are both Hermitian, by the identity $\left(A\ot B\right)^\dagger=A^\dag\ot B^\dag$ and a relabelling of indices $i$ and $j$. 

To show that a Hermitian matrix is positive it suffices to show that its expectation values are positive within some complete basis; as usual we use the Schmidt basis for convenience. The initial map being CP implies that $\sum_l \bra{m}K_l\ket{n}\bra{n} K_l^{\dag}\ket{m}\geq 0$ for all $m,n$. In the related case, we see
\eq{
&\sum_l \bra{m}\Sigma K_l^{T}\Sigma^{-1}\ket{n}\bra{n}\Sigma^{-1} K_l^{*}\Sigma\ket{m} \\
=&\sum_l \bra{m}\sigma_m K_l^{T}\sigma_n^{-1}\ket{n}\bra{n}\sigma_{n}^{-1} K_l^{*}\sigma_m\ket{m}\geq 0\\
\iff &\sum_l \bra{m} K_l^{T}\ket{n}\bra{n} K_l^{*}\ket{m}\geq 0\\
\iff &\sum_l \bra{n} K_l^{\dag}\ket{m}\bra{m} K_l^{}\ket{n}\geq 0,
} which is the assumed condition. Thus the related map on a fully entangled state is CP given that the initial map is CP. This result can be extended to the case where $d_A <d_B$ using the right inverse discussed in \eqref{eq:V tilde sigma right} of the main text.

\section{Maximizing fidelity with unitary operations}
\label{app:maximizing fidelity}
Given a fully entangled bipartite state $\left|\psi\right\rangle =\sum_{i=1}^{d}\sigma_i\left|i\right\rangle \left|i\right\rangle $
where $d=d_{A}=d_{B}$ for clarity of the derivation, we know that a unitary $U$ on subsystem $A$ will
have a related operation $V=\Sigma U^T \Sigma^{-1}$ on $B$. The related operation takes $\ket{\psi}$ to a normalized quantum state; however, when acting on other states $V$ will not, in general, be trace-preserving. This prompts the question of which unitary operation $V$ maximizes the fidelity between $U\otimes\id\ket{\psi}$ and $\id\ot V\ket{\psi}$ for a given $\ket{\psi}$ and $U$.

\subsection{Fully-entangled states}
We first start with states whose Schmidt matrices $\Sigma$ are full rank. Given an operation $U$ on $A$, we would like to find
\eq{M\left(U\right)=\max_V\left|\bra{\psi} U^\dag\ot V\ket{\psi}\right|,}
where the maximum is taken over all unitary operations. Full entanglement implies
\eq{M\left(U\right)&=\max_V\left|\bra{\psi} \id\ot V \Sigma U^*\Sigma^{-1}\ket{\psi}\right|\\
&=\max_V\left|\sum_{i}^d\bra{i}  V \Sigma U^*\left(\sum_{k=1}^d\sigma_k^{-1}\ket{k}\bra{k}\right)\sigma_i^2\ket{i}\right|\\
&=\max_V\left|\Tr\left(V\Sigma U^*\Sigma\right)\right|.
\label{eq:maximize unitary V derivation}
}
The von Neumann operator trace inequality tells us that $\left|\Tr\left(X\right)\right|\leq\Tr\left|X\right|$, and $\left|V\Sigma U^*\Sigma\right|=\left|\Sigma U^*\Sigma\right|$ is independent of $V$, so we can choose $V$ to be the inverse of the unitary part of $\Sigma U^*\Sigma$ to yield
\eq{M\left(U\right)=\Tr\left|\Sigma U^*\Sigma\right|=\Tr\left|\Sigma U\Sigma\right|,} for any fully entangled state. This is easy to compute for maximally entangled states:
\eq{M\left(U\right)&=\frac{1}{d}\Tr\left|U\right|
=1,\quad \forall\ U.}
For other states, it is possible to achieve $M=1$ {only} for certain values of $U$. This prompts a measure of entanglement for a given state that minimizes $M$ over all unitaries $U$. The minimum is achieved when $U$ is an anti-diagonal matrix with ones on the diagonal, causing the largest and smallest Schmidt coefficients to be paired together:
\eq{m=\min_{U\in \mathcal{U}}\Tr\left|\Sigma U\Sigma\right|=\sum_{i=1}^d \sigma_i\sigma_{d-i}.}
However, this measure will not be viable for states with Schmidt rank less than $d/2$, so we will explore a new measure after discussing how to calculate $M$ for such states.

\subsection{Arbitrary pure states}
We would now like to calculate
\eq{M\left(U\right)=\max_V\left|\bra{\psi} U^\dag\ot V\ket{\psi}\right|,}
for unitaries $U$ and states $\ket{\psi}$ whose Schmidt matrices $\Sigma$ are not necessarily full rank. Arranging the bases such that the nonzero Schmidt coefficients are in the first $r\times r$ blocks of their corresponding matrices, we have that
\eq{M\left(U\right)=\max_V\left|\bra{\psi} \begin{pmatrix}u_r^\dag&\mathbf{0}\\\mathbf{0}&\id_{d-r}\end{pmatrix}\ot \begin{pmatrix}v_r&\mathbf{0}\\\mathbf{0}&\id_{d-r}\end{pmatrix}\ket{\psi}\right|.} The matrices $u_r$ and $v_r$ are the first $r\times r$ blocks of $U$ and $V$, respectively, and are not necessarily unitary, but one can always find $V$ such that $v_r$ is unitary. 

The entire derivation of \eqref{eq:maximize unitary V derivation} now holds within the Schmidt subspace, where we now use $\Sigma_r$ to represent the first $r\times r$ block of $\Sigma$:
\eq{M\left(U\right)&=\max_V\left|\bra{\psi} u_r^\dag\ot v_r\ket{\psi}\right|\\
&=\max_V\left|\bra{\psi} \id\ot v_r \Sigma_r u_r^*\Sigma_r^{-1}\ket{\psi}\right|\\
&=\max_V\left|\Tr\left(v_r\Sigma_r u_r^*\Sigma_r\right)\right|.
}
Again, the absolute value of the trace is maximized by taking $v_r$ to be a unitary matrix inverting the unitary part of $\Sigma_r u_r^* \Sigma_r$, yielding the same result as for fully entangled states:
\eq{M\left(U\right)=\Tr\left|\Sigma_r u_r^T\Sigma_r\right|=\Tr\left|\Sigma U\Sigma\right|.} 

\subsection{Mixed states}
The above generalizes to bipartite mixed states by purifying the system with an {ancillary system} $C$ and using the Schmidt decomposition for the bipartition $A|BC$. However, now the maximization over $V$ does not guarantee that the correct unitary matrix can be found. Even for a maximally entangled purified state, the resulting condition is
\eq{M\left(U\right)&=\max_V\left|\bra{\psi_{ABC}}U^\dag\ot V\ot \id_C\ket{\psi_{ABC}}\right|\\
&=\max_V\left|\Tr\left[\left(V_B\ot \id_C\right)\left(\Sigma U^*\Sigma\right)_{BC}\right]\right|\\
&\neq \Tr\left|\Sigma U^*\Sigma\right|,
} with the inequality stemming from the lack of freedom within $V_B\ot\id_C$ to achieve arbitrary unitaries on $BC$.

If we instead maximized over unitary operators on $BC$, we would simply recover the result that $M\left(U\right)=\Tr\left|\Sigma U\Sigma\right|$, where $\Sigma$ is the matrix of Schmidt coefficients of the $A|BC$ partition of the purified state.

\subsection{Averaging over $U$} A suitable way to measure entanglement is by averaging $M\left(U\right)$ over all $U$ using the normalized Haar measure $dU$.
Separable states have $M\left(U\right)=\left|U_{11}\right|=\left|\bra{\psi}U\ket{\psi}\right|$, requiring the average of a single element of a unitary over the Haar measure $dU$.
This can be done using the fact that each element $U_{jk}=r_{jk}e^{i\theta_{jk}}$ of a unitary matrix has the same distribution, namely \cite{PetzReffy2004} \eq{dU_{jk}=\frac{d-1}{\pi}\left(1-r_{jk}^2\right)^{d-2} r_{jk}\, dr_{jk}\, d\theta_{jk},} where $\theta_{jk}\in\left[0,2\pi\right]$ and  $r_{jk}
\in \left[0,1\right]$. For $\left|U_{11}\right|$, we get \eq{\int dU \left|U_{11}\right|=\frac{\sqrt{\pi } \Gamma (d)}{2 \Gamma \left(d+\frac{1}{2}\right)}
= \frac{2^{2d-2}}{2d-1}\binom{2d-2}{d-1}^{-1}.}

If we try to deviate slightly from a separable state, with $\Sigma=\text{diag}\left(\sqrt{1-\epsilon},\sqrt{\epsilon},0,\cdots\right)$, we have 
\eq{\label{eq:M 2x2}
M\left(U\right)&=\Tr\left|\begin{pmatrix}
U_{11}\left(1-\epsilon\right) & U_{12}\sqrt{\epsilon}\sqrt{1-\epsilon}\\
 U_{21}\sqrt{\epsilon}\sqrt{1-\epsilon} & \epsilon U_{22}
\end{pmatrix}\right|.
} To lowest order in $\epsilon$ the eigenvalues of this matrix are $U_{11}+\tfrac{U_{12}U_{21}}{U_{11}}\sqrt{\epsilon}$ and $\left(U_{22}-\tfrac{U_{12}U_{21}}{U_{11}}\right)\sqrt{\epsilon}$. This yields \eq{M\left(U\right)&\approx \left|U_{11}\right|\\
+\sqrt{\epsilon}&\left(\left|U_{22}-\frac{U_{12}U_{21}}{U_{11}}\right|
+\frac{1}{\left|U_{11}\right|}\Re \left[\frac{U_{11}^*U_{12}U_{21}}{U_{11}}\right]\right).} 
The $\mathcal{O}\left(\sqrt{\epsilon}\right)$ term always averages to a positive number because the first term in parentheses is always positive, and the second term in parentheses is proportional to $\cos\left(\theta_{12}+\theta_{21}-2\theta_{11}\right)$, which averages to zero. We see that separable pure states are a minimum of our measure. 

We relax the restrictions of small $\epsilon$ and plot our entanglement quantifier versus $\epsilon$ for various dimensions in Fig. \ref{fig:comparing248}. The integrand \eqref{eq:M 2x2} is the same for each dimension $d$ and the integrals over extraneous components of $U$ give unity, but the Haar measure changes with $d$. This makes the measure
\eq{
E=\int dU M\left(U\right),
}
decrease monotonically with $d$, as does the normalized measure
\eq{\label{eq:normalized measure}
\mathcal{E}=\frac{2 \Gamma \left(d+\frac{1}{2}\right)E-\sqrt{\pi } \Gamma (d)}{2 \Gamma \left(d+\frac{1}{2}\right)-\sqrt{\pi } \Gamma (d)}.} The measure $m=\sum_{i=1}^d\sigma_i\sigma_{d-i}$ also decreases monotonically with $d$, but cannot distinguish between states with Schmidt rank less than $d/2$.
\begin{figure}
\centering
    \includegraphics{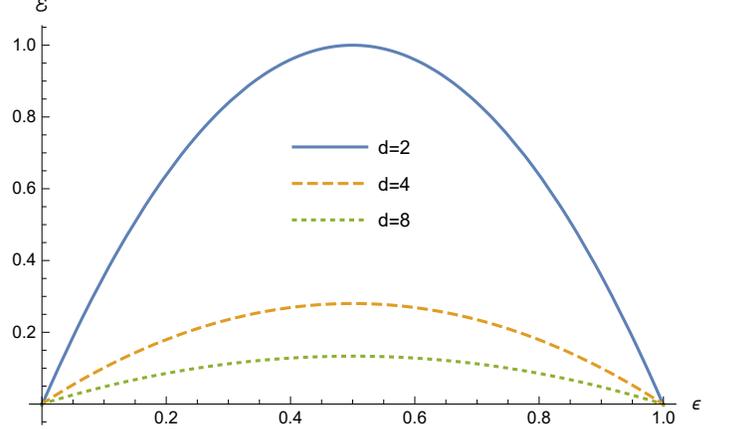}
    \caption{Entanglement in the $d$-dimensional state with Schmidt coefficients $\left(\sqrt{1-\epsilon},\sqrt{\epsilon},0,\cdots\right)$ versus $\epsilon$. For $d=2$ (blue, solid) the normalized measure \eqref{eq:normalized measure} ranges from 0 for separable states ($\epsilon=0,1$) to 1 for maximally entangled states ($\epsilon=1/2$). For $d=4$ (orange, dashed) and $d=8$ (green, dotted) the state can never be maximally entangled. The ordering of the entanglement contained between states with different $\epsilon$ is independent of $d$, and the measure decreases monotonically with $d$.}
    \label{fig:comparing248}
\end{figure}

\subsection{Convexity} The convexity of our entanglement quantifiers $m\left(\sum_i p_i\rho_i\right)\leq\sum_ip_im\left(\rho_i\right)$ and $E_S\left(\sum_i p_i\rho_i\right)\leq\sum_ip_iE_S\left(\rho_i\right)$ follows from the triangle inequality
\eq{
\max_V
\left|\sum_ip_i\Tr\left(U\ot V\,\rho_i\right)\right|\leq \sum_ip_i\max_V\left|\Tr\left(U\ot V\,\rho_i\right)\right|\\
\leq \sum_ip_i\max_{V_i}\left|\Tr\left(U\ot V_i\,\rho_i\right)\right|.
}

\end{document}